\theoremstyle{plain}
\newcommand{\lyxaddress}[1]{
\par {\raggedright #1
\vspace{1.4em}
\noindent\par}
}
\theoremstyle{plain}
\newtheorem{thm}{Theorem}
  \theoremstyle{definition}
  \newtheorem{defn}[thm]{Definition}
  \theoremstyle{remark}
  \newtheorem{rem}[thm]{Remark}
  \theoremstyle{plain}
  \newtheorem{lem}[thm]{Lemma}
  \theoremstyle{plain}
  \newtheorem{cor}[thm]{Corollary}
 \theoremstyle{definition}
  \newtheorem{example}[thm]{Example}
\DeclareMathOperator{\N}{N}
\begin{document}

\title{Estimating the null distribution for conditional inference and genome-scale
screening }

\author{David R. Bickel}

\maketitle

\lyxaddress{Ottawa Institute of Systems Biology\\
Department of Biochemistry, Microbiology, and Immunology\\
Department of Mathematics and Statistics \\
University of Ottawa\\
451 Smyth Road\\
Ottawa, Ontario, K1H 8M5}

\lyxaddress{}
\begin{abstract}
In a novel approach to the multiple testing problem, Efron (2004;
2007) formulated estimators of the distribution of test statistics
or nominal \emph{p}-values under a null distribution suitable for
modeling the data of thousands of unaffected genes, non-associated
single-nucleotide polymorphisms, or other biological features. Estimators
of the null distribution can improve not only the empirical Bayes
procedure for which it was originally intended, but also many other
multiple comparison procedures. Such estimators serve as the groundwork
for the proposed multiple comparison procedure based on a recent frequentist
method of minimizing posterior expected loss, exemplified with a non-additive
loss function designed for genomic screening rather than for validation. 

The merit of estimating the null distribution is examined from the
vantage point of conditional inference in the remainder of the paper.
In a simulation study of genome-scale multiple testing, conditioning
the observed confidence level on the estimated null distribution as
an approximate ancillary statistic markedly improved conditional inference.
To enable researchers to determine whether to rely on a particular
estimated null distribution for inference or decision making, an information-theoretic
score is provided that quantifies the benefit of conditioning. As
the sum of the degree of ancillarity and the degree of inferential
relevance, the score reflects the balance conditioning would strike
between the two conflicting terms. 

Applications to gene expression microarray data illustrate the methods
introduced.
\end{abstract}
Keywords: ancillarity; attained confidence level; composite hypothesis
testing; conditional inference; empirical null distribution; GWA;
multiple comparison procedures; observed confidence level; simultaneous
inference; simultaneous significance testing; SNP

\newpage{}

\section{\label{sec:Introduction}Introduction}

\subsection{\label{sub:Multiplicity}Multiple comparison procedures}

\subsubsection{Aims of multiple-comparison adjustments}

Controversy surrounding whether and how to adjust analysis results
for multiple comparisons can be partly resolved by recognizing that
a procedure that works well for one purpose is often poorly suited
for another since different types of procedures solve distinct statistical
problems. Methods of adjustment have been developed to attain three
goals, the first two of which optimize some measure of sample space
performance:
\begin{enumerate}
\item \emph{Adjustment for selection}. The most common concern leading to
multiple-comparison adjustments stems from the observation that results
can achieve nominal statistical significance because they were selected
to do so rather than because of a reproducible effect. Adjustments
of this type are usually based on control of a Type I error rate such
as a family-wise error rate or a false discovery rate as defined by
\citet{RefWorks:288}. \citet{RefWorks:296} reviewed several options
in the context of gene expression microarray data. 
\item \emph{Minimization of a risk function}. \citet{Stein1956197} proved
that the maximum likelihood estimator (MLE) is inadmissible for estimation
of a multivariate normal mean under squared error loss, even in the
absence of correlation. \citet{EfronMorris1973} extended the result
by establishing that the MLE is dominated by a wide class of estimators
derived via an empirical Bayes approach in which the mean is random.
More recently, \citet{RefWorks:67} adjusted \emph{p}-values for multiple
comparisons by minimizing their risk as estimators of a posterior
probability. In the presence of genome-scale numbers of comparisons,
adjustments based on hierarchical models are often much less extreme
than those needed to adjust for selection. For two examples from microarray
data analysis, \citet{RefWorks:1275} found that posterior intervals
based on a local false discovery rate (LFDR) estimate tend to be substantially
narrower than those needed to control the false coverage rate introduced
by \citet{RefWorks:1276} to account for selection, and an LFDR-based
posterior mean has insufficient shrinkage toward the null to adequately
correct selection bias \citep{RefWorks:26}. 
\item \emph{Estimation of null or alternative distributions.} Measurements
over thousands of biological features available from studies of genome-scale
expression and genome-wide association studies have recently enabled
estimation of distributions of \emph{p}-values. Early empirical Bayes
methods of estimating the LFDR associated with each null hypothesis
employ estimates of the distribution of test statistics or \emph{p}-values
under the alternative hypothesis \citep[e.g.,][]{RefWorks:53}. Efron
\citeyearpar{RefWorks:55,RefWorks:57} went further, demonstrating
the value of also estimating the distribution of \emph{p}-values under
the null hypothesis provided a sufficiently large number of hypotheses
under simultaneous consideration. 
\end{enumerate}
While all three aims are relevant to Neyman-Pearson testing, they
differ as much in their relevance to Fisherian significance testing
as in the procedures they motivate. \citet{RefWorks:573} pointed
out that Type I error rate control is appropriate for making series
of decisions but not for inductive reasoning, where the inferential
evaluation of evidence is of concern apart from loss functions that
depend on how that evidence will be used, which, as \citet[pp. 95-96, 103-106]{RefWorks:985}
stressed, might not even be known at the time of data analysis. Likewise,
\citet{RefWorks:1304} and \citet{RefWorks:1303} found optimization
over the sample space helpful for making series of decisions rather
than for drawing scientific inferences from a particular observed
sample. Cox \citeyearpar{conditionalityPrinciple1958,CoxBook} noted
that selection of a function to optimize is inherently subjective
to the extent that different decision makers have different interests.
Further, sample space optimality is often achieved at the expense
of induction about the parameter given the data at hand; for example,
optimal confidence intervals result from systematically stretching
them in samples of low variance and reducing them in samples of high
variance relative to their conditional counterparts \citep{conditionalityPrinciple1958,RefWorks:1331,RefWorks:1302,FraserAncillaries2004a,FraserAncillaries2004e}. 

The suitability of the methods of both of the first two goals for
decision rules as opposed to inductive reasoning is consistent with
the observation that control of Type I error rates may be formulated
as a minimax problem (e.g., \citealt[§1.5]{RefWorks:1338,Wald1950}),
indicating that the second of the above aims generalizes the first.
Although corrections in order to account for selection are often applied
when it is believed that only a small fraction of null hypotheses
are false \citep{CoxBook}, the methods of controlling a Type I error
rate used to make such corrections are framed in terms of rejection
decisions and thus may depend on the number of tests conducted, which
would not be the case were the degree of correction a function only
of prior beliefs. By contrast with the first two aims, the third aim,
improved specification of the alternative or null distribution of
test statistics, is clearly as important in significance testing as
in fixed-level Neyman-Pearson testing. In short, while the first two
motivations for multiple comparison procedures address decision-theoretic
problems, only the third pertains to significance testing in the sense
of impartially weighing evidence without regard to possible consequences
of actions that might be taken as a result of the findings.

\subsubsection{\label{sub:Estimating-the-null}Estimating the null distribution}

Because of its novelty and its potential importance for many frequentist
procedures of multiple comparisons, the effect of relying on the following
method due to Efron \citeyearpar{RefWorks:55,RefWorks:57,RefWorks:208}
of estimating the null distribution will be examined herein. The method
rests on the assumption that about 90\% or more of a large number
of \emph{p}-values correspond to \emph{unaffected features} and thus
have a common distribution called the \emph{true null distribution}.
If that distribution is uniform, then the \emph{assumed null distribution}
of test statistics with respect to which the \emph{p}-values were
computed is correct. 

In order to model the null distribution as a member of the normal
family, the \emph{p}-values are transformed by $\Phi^{-1}:\left[0,1\right]\rightarrow\mathbb{R}^{1},$
the standard normal quantile function. The parameters of that distribution
are estimated either by fitting a curve to the central region of a
histogram of the transformed \emph{p}-values \citep{RefWorks:55}
or, as used below, by applying a maximum likelihood procedure to a
truncated normal distribution \citep{RefWorks:208}. The main justification
for both algorithms is that since nearly all \emph{p}-values are modeled
as variates from the true null distribution and since the remaining
\emph{p}-values are considered drawn from a distribution with wider
tails, the less extreme \emph{p}-values better resemble the true null
distribution than do those that are more extreme. Since the theoretical
null distribution is standard normal in the transformed domain, deviations
from the standard normal distribution reflect departures in the less
extreme \emph{p}-values from uniformity in the original domain. 

For use in multiple testing, all of the transformed \emph{p}-values
of the data set are treated as test statistics for the derivation
of new \emph{p}-values with respect to the null distribution estimated
as described above instead of the assumed null distribution. Such
adjusted \emph{p}-values would be suitable for inductive inference
or for decision-theoretic analyses such as those controlling error
rates, provided that the true null distribution tends to be closer
to the estimated null distribution than it is to the assumed null
distribution.

\subsection{Overview}

The next section presents a confidence-based distribution of a vector
parameter in order to unify the present study of null distribution
estimation within a single framework. The general framework is then
applied to the problem of estimating the null distribution in Section
\ref{sub:Estimation-of-null}. Section \ref{sub:Non-additive-loss}
introduces a multiple comparisons procedure for coherent decisions
made possible by the confidence-based posterior without recourse to
Bayesian or empirical Bayesian models. 

Adjusting \emph{p}-values by the estimated null distribution is interpreted
as inference conditional on that estimate in Section \ref{sec:Conditional}.
The simulation study of Section \ref{sub:Simulations} demonstrates
that estimation of the null distribution can substantially improve
conditional inference even when the assumed null distribution is correct
marginal over a precision statistic. Section \ref{sub:Merit} provides
a method for determining whether the estimated null distribution is
sufficiently ancillary and relevant for effective conditional inference
or decision making on the basis of a given data set. 

Section \ref{sec:Discussion} concludes with a discussion of the new
findings and methods.

\section{\label{sec:Statistical-framework}Statistical framework}

\subsection{\label{sub:Posterior}Confidence levels as posterior probabilities}

The observed data vector $x\in\Omega$ is modeled as a realization
of a random quantity $X$ of distribution $P_{\xi},$ a probability
distribution on the measurable space $\left(\Omega,\Sigma\right)$
that is specified by the \emph{full parameter} $\xi\in\Xi\subseteq\mathbb{R}^{d}.$
Let $\theta=\theta\left(\xi\right)$ denote a \emph{parameter of interest}
in $\Theta$ and $\gamma=\gamma\left(\xi\right)$ a \emph{nuisance
parameter}. 
\begin{defn}
In addition to the above family of probability measures $\left\{ P_{\xi}:\xi\in\Xi\right\} $,
consider a family of probability measures $\left\{ P^{x}:x\in\Omega\right\} ,$
each on the space $\left(\Theta,\mathcal{A}\right),$ and a set $\mathcal{R}\left(S\right)=\left\{ \hat{\Theta}_{\rho,s\left(\rho\right)}:\rho\in\left[0,1\right],s\in S\right\} $
of region estimators corresponding to a set $S$ of shape functions,
where $\hat{\Theta}_{\rho,s\left(\rho\right)}:\Omega\rightarrow\mathcal{A}$
for all $\rho\in\left[0,1\right]$ and $s\in S.$ If, for every $\Theta^{\prime}\in\mathcal{A},$
$x\in\Omega,$ and $\xi\in\Xi,$ there exist a coverage rate $\rho$
and shape $s\left(\rho\right)$ such that

\begin{equation}
P^{x}\left(\Theta^{\prime}\right)=\rho=P_{\xi}\left(\theta\left(\xi\right)\in\hat{\Theta}_{\rho,s\left(\rho\right)}\left(X\right)\right)\label{eq:matching}\end{equation}
and $\hat{\Theta}_{\rho,s\left(\rho\right)}\left(x\right)=P^{x}\left(\Theta^{\prime}\right),$
then the probability $P^{x}\left(\Theta^{\prime}\right)$ is the \emph{confidence
level} of the hypothesis $\theta\left(\xi\right)\in\Theta^{\prime}$
according to $P^{x},$ the \emph{confidence measure} over $\Theta$
corresponding to $\mathcal{R}\left(S\right).$ \end{defn}
\begin{rem}
Unless the $\sigma$-field $\mathcal{A}$ is Borel, the confidence
level of the hypothesis of interest will not necessarily be defined;
cf. \citet{FiducialMcCullagh2004}.
\end{rem}
Building on work of \citet{RefWorks:249} and others, \citet{Polansky2007b}
used the equivalent of $P^{x}$ to concisely communicate a distribution
of {}``observed confidence'' or {}``attained confidence'' levels
for each hypothesis that $\theta$ lies in some region $\Theta^{\prime}.$
The decision-theoretic {}``certainty'' interpretation of $P^{x}$
as a non-Bayesian posterior \citep{CoherentFrequentism} serves the
same purpose but also ensures the coherence of actions that minimize
expected posterior loss. \citet{RefWorks:1081} also considered interpreting
the ratio $\rho/\left(1-\rho\right)$ from equation (\ref{eq:matching})
as odds for betting on the hypothesis that $\theta\in\Theta^{\prime}.$ 

The posterior distribution need not conform to the Bayes update rule
\citep{CoherentFrequentism} since decisions that minimize posterior
expected loss, or, equivalently, maximize expected utility, are coherent
as long as the posterior distribution is some finitely additive probability
distribution over parameter space (see, e.g., \citealp{RefWorks:1488}).
It follows that an intelligent agent that acts as if $\rho/\left(1-\rho\right)$
are fair betting odds for the hypothesis that $\theta$ lies in a
level-$\rho$ confidence region estimated by some region estimator
of exact coverage rate $\rho$ is coherent if and only if its actions
minimize expected loss with the expectation value over a confidence
measure as the probability distribution defining the expectation value
(cf. \citealp{CoherentFrequentism}). Minimizing expected loss over
the parameter space, whether based on a confidence posterior or on
a Bayesian posterior, differs fundamentally from the decision-theoretic
approach of Section \ref{sub:Multiplicity} in that the former is
optimal given the single sample actually observed whereas the latter
is optimal over repeated sampling. Section \ref{sub:Non-additive-loss}
illustrates the minimization of confidence-measure expected loss with
an application to screening  on the basis of genomics data.

\subsection{\label{sub:Two-sided-testing}Confidence levels versus \emph{p}-values}

Whether confidence levels agree with \emph{p}-values depends on the
parameter of interest and on the chosen hypotheses. If $\theta$ is
a scalar and the null hypothesis is $\theta=\theta^{\prime},$ the
\emph{p}-values associated with the alternative hypotheses $\theta>\theta^{\prime}$
and $\theta<\theta^{\prime}$ are $P^{x}\left(\left(-\infty,\theta^{\prime}\right)\right)$
and $P^{x}\left(\left(\theta^{\prime},\infty\right)\right),$ respectively;
cf. \citet{RefWorks:127}. 

On the other hand, a \emph{p}-value associated with a two-sided alternative
is not typically equal to the confidence level $P^{x}\left(\left\{ \theta^{\prime}\right\} \right).$
\citet[pp. 126-128, 216]{Polansky2007b} discusses the tendency of
the attained confidence level of a point or simple hypotheses such
as $\theta=\theta^{\prime}$ to vanish in a continuous parameter space.
That only a finite number of points in hypothesis space have nonzero
confidence is required of any evidence scale that is fractional in
the sense that the total strength of evidence over $\Theta$ is finite.
(Fractional scales enable statements of the form, {}``the negative,
null, and positive hypotheses are 80\%, 15\%, and 5\% supported by
the data, respectively.'') While the usual two-sided \emph{p}-value
vanishes only for sufficiently large samples, the confidence level
$P^{x}\left(\left\{ \theta^{\prime}\right\} \right)$ typically is
0\% even for the smallest samples and thus does not lead to the appearance
of a paradox of {}``over-powered'' studies. As a remedy, \citet{HodgesLehmann1954}
proposed testing an interval hypothesis $\theta\in\Theta^{\prime}$
defined in terms of scientific significance; in this situation, as
with composite hypothesis-testing in general, $P^{x}\left(\Theta^{\prime}\right)$
converges in probability to $1_{\Theta^{\prime}}\left(\theta\right)$
even though the two-sided \emph{p}-value does not \citep{CoherentFrequentism}.
(Testing a simple null hypothesis against a composite alternative
hypothesis yields a similar discrepancy between a two-sided \emph{p}-value
and methods that respect the likelihood principle \citep{Levine1996331,RefWorks:435}.)

There are nonetheless situations that, when using \emph{p}-values
for statistical significance, necessitate testing a hypothesis known
to be false for all practical purposes. \citet{RefWorks:294} called
a null hypothesis $\theta=\theta^{\prime}$ \emph{dividing} if it
is not considered because it could possibly be approximately true
but rather because it lies on the boundary between $\theta<\theta^{\prime}$
and $\theta>\theta^{\prime}$, the two hypotheses of genuine interest.
For example, a test of equality of means and its associated two-sided
\emph{p}-value often serve the purpose of determining whether there
are enough data to determine the direction of the difference when
it is known that there is some appreciable difference \citep{RefWorks:294}.
That goal can be more directly attained by comparing the confidence
levels $P^{x}\left(\left(-\infty,\theta^{\prime}\right)\right)$ and
$P^{x}\left(\left(\theta^{\prime},\infty\right)\right).$ While reporting
the ratio or maximum of $P^{x}\left(\left(-\infty,\theta^{\prime}\right)\right)$
and $P^{x}\left(\left(\theta^{\prime},\infty\right)\right)$ would
summarize the confidence level of each of two regions in a single
number, such a number may be more susceptible to misinterpretation
than a report of the pair of confidence levels.

\subsection{\label{sub:Simultaneous}Simultaneous inference}

In the typical genome-scale problem, there are $d$ scalar parameters
$\theta_{1},$ $\theta_{2},$ ..., $\theta_{d}$ and $d$ corresponding
observables $X_{1},$ $X_{2},$ ..., $X_{d},$ such that $d\ge1000$
and $\theta_{i}=\theta_{i}\left(\xi\right)$ is a subparameter of
the distribution of $X_{i},$ the random quantity of which the observation
$x_{i}\in\Omega_{i}$ is a realized vector. The $i$th of the $d$
hypotheses to be simultaneously tested is $\theta_{i}\in\Theta_{i}^{\prime}$
for some $\Theta_{i}^{\prime}$ in $\Theta_{i},$ a subset of $\mathbb{R}^{1}.$
Representing numeric tuples under the angular bracket convention to
distinguish the open interval $\left(x,y\right)$ from the ordered
pair $\left\langle x,y\right\rangle ,$ $\theta=\theta\left(\xi\right)=\left\langle \theta_{1},\theta_{2},\ldots,\theta_{d}\right\rangle $
is the $d$-dimensional subparameter of interest and the joint hypothesis
is $\theta\left(\xi\right)\in\Theta^{\prime},$ where $\Theta^{\prime}=\Theta_{1}^{\prime}\times\Theta_{2}^{\prime}\times\cdots\times\Theta_{d}^{\prime}$.

For any $\delta\in\left\{ 1,2,...,d-1\right\} ,$ inference may focus
on $\delta$ of the scalar parameters as opposed to the entire vector
$\theta.$ For example, separate consideration of the confidence levels
of hypotheses such as $\theta_{1}\in\Theta_{1}^{\prime}$ or of $\left\langle \theta_{1},\theta_{2}\right\rangle \in\Theta_{1}^{\prime}\times\Theta_{2}^{\prime}$
can be informative, especially if $d$ is high. Each of the components
of the \emph{focus index }$\iota=\left\langle i\left(1\right),i\left(2\right),\ldots,i\left(\delta\right)\right\rangle $
is in $\left\{ 1,...,d\right\} $ and is unequal to each of its other
components. The proper subset $\tilde{\Theta}_{\iota}^{\prime}=\Theta_{i\left(1\right)}^{\prime}\times\Theta_{i\left(2\right)}^{\prime}\times\cdots\times\Theta_{i\left(\delta\right)}^{\prime}$
of $\tilde{\Theta}_{\iota}=\Theta_{i\left(1\right)}\times\Theta_{i\left(2\right)}\times\cdots\times\Theta_{i\left(\delta\right)}$
is defined in order to weigh the evidence for the hypothesis that
$\tilde{\theta}_{\iota}=\left\langle \theta_{i\left(1\right)},\theta_{i\left(2\right)},\ldots,\theta_{i\left(\delta\right)}\right\rangle \in\tilde{\Theta}_{\iota}^{\prime}.$
Setting $\Theta_{\iota}^{\prime}=\Theta_{1}^{\prime}\times\Theta_{2}^{\prime}\times\cdots\times\Theta_{d}^{\prime}$
such that $\Theta_{j}^{\prime}=\Theta_{j}$ for all $j\notin\left\{ i\left(1\right),i\left(2\right),\ldots,i\left(\delta\right)\right\} ,$
define the marginal distribution $P_{\iota}^{x}$ such that $P_{\iota}^{x}\left(\tilde{\Theta}_{\iota}^{\prime}\right)$
is equal to the confidence level $P^{x}\left(\Theta_{\iota}^{\prime}\right).$
Thus, $P_{\iota}^{x}$ is a probability measure marginal over all
$\theta_{j}$ with $j\notin\left\{ i\left(1\right),i\left(2\right),\ldots,i\left(\delta\right)\right\} .$ 

The following lemma streamlines inference focused on whether $\tilde{\theta}_{\iota}\in\tilde{\Theta}_{\iota}^{\prime},$
or, equivalently, $\theta\left(\xi\right)\in\Theta_{\iota}^{\prime},$
by establishing sufficient conditions for the confidence level marginal
over some of the $d$ components of $\theta$ to be equal to the parameter
coverage probability marginal over the data corresponding to those
components. 
\begin{lem}
Considering a focus index $\iota$ and $\tilde{X}_{\iota}=\left\langle X_{i\left(1\right)},X_{i\left(2\right)},\ldots,X_{i\left(\delta\right)}\right\rangle ,$
let $\hat{\Theta}_{\rho,s\left(\rho\right)}^{\iota}:\Omega\rightarrow\tilde{\mathcal{A}}_{\iota}$
be the corresponding level-$\rho$ set estimator of some shape parameter
$s\left(\rho\right)$ defined such that for every $x\in\Omega,$ $\hat{\Theta}_{\rho,s\left(\rho\right)}^{\iota}\left(x\right)$
is the canonical projection of $\hat{\Theta}_{\rho,s\left(\rho\right)}\left(x\right)$
from $\mathcal{A}$ to $\tilde{\mathcal{A}}_{\iota},$ the $\sigma$-field
of the marginal distribution $P_{\iota}^{x}.$ If there is a map $\tilde{\Theta}_{\rho,s\left(\rho\right)}^{\iota}:\tilde{\Omega}_{\iota}\rightarrow\tilde{\mathcal{A}}_{\iota}$
such that $\tilde{\Theta}_{\rho,s\left(\rho\right)}^{\iota}\left(\tilde{X}_{\iota}\right)$
and $\hat{\Theta}_{\rho,s\left(\rho\right)}^{\iota}\left(X\right)$
are identically distributed, then $P_{\iota}^{x}$ is the confidence
measure over $\tilde{\Theta}_{\iota}$ corresponding to $\left\{ \tilde{\Theta}_{\rho,s\left(\rho\right)}^{\iota}:\rho\in\left[0,1\right],s\in S\right\} .$
\end{lem}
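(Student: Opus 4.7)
The plan is to verify the confidence-measure definition for $P_\iota^x$ directly, by exhibiting, for each admissible region $\tilde{\Theta}_\iota^\prime \in \tilde{\mathcal{A}}_\iota,$ a coverage rate $\rho$ and shape $s(\rho)$ for which the matching equation holds with $\tilde{\Theta}_{\rho,s(\rho)}^\iota$ in place of $\hat{\Theta}_{\rho,s(\rho)}.$ The natural starting move is to lift $\tilde{\Theta}_\iota^\prime$ back to the full product $\Theta_\iota^\prime = \tilde{\Theta}_\iota^\prime \times \prod_{j \notin \iota(\cdot)} \Theta_j,$ which by construction lies in $\mathcal{A}$ and, by the very definition of the marginal distribution, satisfies $P^x(\Theta_\iota^\prime) = P_\iota^x(\tilde{\Theta}_\iota^\prime).$

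The heart of the proof is then a three-link chain of equalities. First, since $P^x$ is by hypothesis the confidence measure corresponding to $\mathcal{R}(S),$ applying its defining property to $\Theta_\iota^\prime$ produces $\rho$ and $s(\rho)$ with
$$P^x(\Theta_\iota^\prime) \;=\; \rho \;=\; P_\xi\bigl(\theta(\xi) \in \hat{\Theta}_{\rho,s(\rho)}(X)\bigr).$$
Second, I would translate the full-space coverage event into a projected-space coverage event: because the non-focus factors of $\Theta_\iota^\prime$ are the entire $\Theta_j,$ the shape $s(\rho)$ can be read as producing a region estimator that factors as $\hat{\Theta}_{\rho,s(\rho)}^\iota(X) \times \prod_{j \notin \iota(\cdot)} \Theta_j,$ so that $\{\theta(\xi) \in \hat{\Theta}_{\rho,s(\rho)}(X)\}$ and $\{\tilde{\theta}_\iota(\xi) \in \hat{\Theta}_{\rho,s(\rho)}^\iota(X)\}$ coincide as events. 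Third, the assumption $\tilde{\Theta}_{\rho,s(\rho)}^\iota(\tilde{X}_\iota) \stackrel{d}{=} \hat{\Theta}_{\rho,s(\rho)}^\iota(X)$ swaps the projection for the intrinsically marginal estimator: since $\tilde{\theta}_\iota(\xi)$ is deterministic, the set-membership indicator has the same distribution under either random set, so $P_\xi(\tilde{\theta}_\iota(\xi) \in \hat{\Theta}_{\rho,s(\rho)}^\iota(X)) = P_\xi(\tilde{\theta}_\iota(\xi) \in \tilde{\Theta}_{\rho,s(\rho)}^\iota(\tilde{X}_\iota)).$ Chaining these three equalities yields $P_\iota^x(\tilde{\Theta}_\iota^\prime) = \rho = P_\xi(\tilde{\theta}_\iota(\xi) \in \tilde{\Theta}_{\rho,s(\rho)}^\iota(\tilde{X}_\iota)),$ which is precisely the matching condition demanded by the confidence-measure definition applied to $\{\tilde{\Theta}_{\rho,s(\rho)}^\iota:\rho\in[0,1],\,s\in S\}.$

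The step that most warrants care is the second: justifying that the coverage probability on the full parameter space transfers losslessly to its canonical projection. The projection trivially gives the one-sided implication $\{\theta(\xi) \in \hat{\Theta}_{\rho,s(\rho)}(X)\} \subseteq \{\tilde{\theta}_\iota(\xi) \in \hat{\Theta}_{\rho,s(\rho)}^\iota(X)\},$ but equality of probabilities requires the region estimator to carry the right product structure in the non-focus coordinates. This rests on the latitude available in the shape family $S,$ which for a product hypothesis of the form $\Theta_\iota^\prime$ is the natural shape to invoke; the existence hypothesis on $\tilde{\Theta}_{\rho,s(\rho)}^\iota$ is exactly the device that lets us dispense with any explicit construction and read off the projected coverage as the coverage of a genuine marginal region estimator acting on $\tilde{X}_\iota$ alone. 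Once that alignment is in place, the three equalities close the proof, and the additional normalization requirement analogous to ``$\hat{\Theta}_{\rho,s(\rho)}(x)=\Theta^\prime$'' follows by projecting the full-space equality through $\pi_\iota.$
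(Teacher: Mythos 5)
Your proposal is correct and follows essentially the same route as the paper's proof: both pass from $P_{\iota}^{x}\bigl(\tilde{\Theta}_{\iota}^{\prime}\bigr)=P^{x}\bigl(\Theta_{\iota}^{\prime}\bigr)$ through the matching equation, drop the non-focus components by exploiting $\Theta_{j}^{\prime}=\Theta_{j}$ (the paper phrases this as intersecting with a probability-one event $A_{\iota}$, you phrase it as the product structure of the region estimator --- the same implicit assumption either way), and then invoke the identical-distribution hypothesis to replace $\hat{\Theta}_{\rho,s\left(\rho\right)}^{\iota}\left(X\right)$ by $\tilde{\Theta}_{\rho,s\left(\rho\right)}^{\iota}\bigl(\tilde{X}_{\iota}\bigr)$. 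Your explicit flagging of the projection step as the one needing care is a fair observation, but it does not constitute a departure from the paper's argument.
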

\begin{proof}
By the general definition of confidence level (\ref{eq:matching}),
\[
P_{\iota}^{x}\left(\tilde{\Theta}_{\iota}^{\prime}\right)=P^{x}\left(\Theta_{\iota}^{\prime}\right)=P_{\xi}\left(\theta\in\hat{\Theta}_{\rho,s\left(\rho\right)}\left(X\right)\right),\]
where the coverage rate $\rho$ and shape parameter $s\left(\rho\right)$
are constrained such that $\hat{\Theta}_{\rho,s\left(\rho\right)}\left(x\right)=\Theta_{\iota}^{\prime}$
for the observed value $x$ of random element $X.$ Hence, using $A_{\iota}$
to denote the event that $\theta_{j}\in\Theta_{j}^{\prime}$ ,\begin{equation}
P_{\iota}^{x}\left(\tilde{\Theta}_{\iota}^{\prime}\right)=P_{\xi}\left(\tilde{\theta}_{\iota}\in\hat{\Theta}_{\rho,s\left(\rho\right)}^{\iota}\left(X\right),A_{\iota}\right)\label{eq:marginal-weight-is-probability}\end{equation}
with the coverage rate $\rho$ and shape parameter $s\left(\rho\right)$
restricted such that $\hat{\Theta}_{\rho,s\left(\rho\right)}^{\iota}\left(x\right)=\tilde{\Theta}_{\iota}^{\prime}.$
Considering $j\notin\left\{ i\left(1\right),i\left(2\right),\ldots,i\left(\delta\right)\right\} ,$
the event $A_{\iota}$ satisfies $P_{\xi}\left(A_{\iota}\right)=1$
since $\Theta_{j}^{\prime}=\Theta_{j},$ thereby eliminating $A_{\iota}$
from equation (\ref{eq:marginal-weight-is-probability}). Because
$\tilde{\Theta}_{\rho,s\left(\rho\right)}^{\iota}$ exists by assumption,
$\tilde{\Theta}_{\rho,s\left(\rho\right)}^{\iota}\left(\tilde{x}_{\iota}\right)=\tilde{\Theta}_{\iota}^{\prime}$
results and $\tilde{\Theta}_{\rho,s\left(\rho\right)}^{\iota}\left(\tilde{X}_{\iota}\right)$
replaces $\hat{\Theta}_{\rho,s\left(\rho\right)}^{\iota}\left(X\right)$
in equation (\ref{eq:marginal-weight-is-probability}) since they
are identically distributed. Therefore, \[
P_{\iota}^{x}\left(\tilde{\Theta}_{\iota}^{\prime}\right)=\rho=P_{\xi}\left(\tilde{\theta}_{\iota}\in\tilde{\Theta}_{\rho,s\left(\rho\right)}^{\iota}\left(\tilde{X}_{\iota}\right)\right),\]
where the coverage rate $\rho$ and shape parameter $s\left(\rho\right)$
are constrained such that $\tilde{\Theta}_{\rho,s\left(\rho\right)}^{\iota}\left(\tilde{x}_{\iota}\right)=\tilde{\Theta}_{\iota}^{\prime}$
for the observed value $\tilde{x}_{\iota}=\left\langle x_{i\left(1\right)},x_{i\left(2\right)},\ldots,x_{i\left(\delta\right)}\right\rangle $
of $\tilde{X}_{\iota}.$
\end{proof}
Conditional independence is sufficient to satisfy the lemma's condition
of identically distributed region estimators:
\begin{thm}
If $X_{i}$ is conditionally independent of $X_{j}$ and $\theta_{j}$
given $\theta_{i}$ for all $i\ne j,$ then, for any focus index $\iota,$
there is a map $\tilde{\Theta}_{\rho,s\left(\rho\right)}^{\iota}:\tilde{\Omega}_{\iota}\rightarrow\tilde{\mathcal{A}}_{\iota}$
such that $\tilde{\Theta}_{\rho,s\left(\rho\right)}^{\iota}\left(\tilde{x}_{\iota}\right)=\hat{\Theta}_{\rho,s\left(\rho\right)}^{\iota}\left(x\right)$
with $\tilde{x}_{\iota}=\left\langle x_{i\left(1\right)},x_{i\left(2\right)},\ldots,x_{i\left(\delta\right)}\right\rangle $
for every $x\in\Omega,$ and the marginal distribution $P_{\iota}^{x}$
is the confidence measure over $\tilde{\Theta}_{\iota}$ corresponding
to $\left\{ \tilde{\Theta}_{\rho,s\left(\rho\right)}^{\iota}:\rho\in\left[0,1\right],s\in S\right\} .$\end{thm}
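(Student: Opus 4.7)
The plan is to verify the hypothesis of the preceding lemma in the strongest possible form: construct a map $\tilde\Theta_{\rho,s(\rho)}^{\iota}:\tilde\Omega_{\iota}\to\tilde{\mathcal A}_{\iota}$ satisfying the pointwise identity $\tilde\Theta_{\rho,s(\rho)}^{\iota}(\tilde x_{\iota})=\hat\Theta_{\rho,s(\rho)}^{\iota}(x)$ asserted by the theorem. Because pointwise equality trivially implies that the two region-valued random variables have the same distribution, the lemma's hypothesis is then met and its conclusion, that $P_{\iota}^{x}$ is the corresponding confidence measure, follows at once.

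The central task is therefore to show that the projection $\hat\Theta_{\rho,s(\rho)}^{\iota}(x)$ depends on $x$ only through $\tilde x_{\iota}$. First, I would unpack the conditional independence hypothesis: $X_{i}\perp(X_{j},\theta_{j})\mid\theta_{i}$ for all $i\ne j$ forces the joint distribution of $X$ to factorize into marginal kernels depending only on the corresponding $\theta_{i}$, so the marginal law of $\tilde X_{\iota}$ depends on $\xi$ only through $\tilde\theta_{\iota}$ and the complementary data $\{x_{j}:j\notin\iota\}$ carry no information about $\tilde\theta_{\iota}$. Second, I would exploit the product shape $\Theta_{\iota}'=\tilde\Theta_{\iota}'\times\prod_{j\notin\iota}\Theta_{j}$, in which the complementary factors impose no restriction, to argue that $\mathcal R(S)$ contains (or may be taken to contain) a region estimator of Cartesian form whose $\iota$-factor is a marginal-submodel confidence region for $\tilde\theta_{\iota}$ built from $\tilde x_{\iota}$ at level $\rho$, crossed with $\prod_{j\notin\iota}\Theta_{j}$. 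Its canonical projection to $\tilde{\mathcal A}_{\iota}$ is precisely this $\iota$-factor, depending on $x$ only through $\tilde x_{\iota}$; setting $\tilde\Theta_{\rho,s(\rho)}^{\iota}(\tilde x_{\iota})$ equal to this common value furnishes the required map, with measurability inherited from $\hat\Theta_{\rho,s(\rho)}^{\iota}$.

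The main obstacle lies in the second step: the theorem is phrased as if one can always exhibit a suitable product-form estimator within the prescribed family $\mathcal R(S)$, and this requires coordinating the shape function $s(\rho)$ with the focus index. The factorized model supplies the obvious candidate, namely the marginal-model confidence region crossed with the trivial regions for the remaining coordinates, so the delicate point is to verify that this candidate lies in $\mathcal R(S)$ at the particular coverage rate $\rho$ which the ambient confidence measure $P^{x}$ already assigns to $\Theta_{\iota}'$ and which, by construction, equals $P_{\iota}^{x}(\tilde\Theta_{\iota}')$. Once that coverage-rate bookkeeping is in place, the pointwise identity holds and a direct appeal to the lemma completes the proof.
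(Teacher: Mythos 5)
Your route differs substantively from the paper's. The paper argues entirely at the level of distributions: from the hypothesis that $X_{i}$ is conditionally independent of $\left(X_{j},\theta_{j}\right)$ given $\theta_{i}$, it asserts that the projection $\hat{\Theta}_{\rho,s\left(\rho\right)}^{\iota}\left(X\right)$ is conditionally independent of $\theta_{j}$ and $X_{j}$ for $j\notin\iota$ given $\tilde{\theta}_{\iota}$, infers from this the existence of a map of $\tilde{X}_{\iota}$ alone that is \emph{identically distributed} with $\hat{\Theta}_{\rho,s\left(\rho\right)}^{\iota}\left(X\right)$, and then invokes the lemma, which only requires equality in distribution. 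You instead aim at the stronger pointwise identity $\tilde{\Theta}_{\rho,s\left(\rho\right)}^{\iota}\left(\tilde{x}_{\iota}\right)=\hat{\Theta}_{\rho,s\left(\rho\right)}^{\iota}\left(x\right)$ by exhibiting a product-form region estimator whose $\iota$-factor is built from $\tilde{x}_{\iota}$ only.

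The difficulty is that your second step is not closed, and you flag it yourself without resolving it. The claim that $\mathcal{R}\left(S\right)$ contains, or ``may be taken to contain,'' a Cartesian-form estimator at the required coverage rate $\rho$ and shape $s\left(\rho\right)$ is a structural assumption on the family of region estimators, whereas the theorem's hypothesis is a conditional-independence assumption on the probability model of $\left(X,\theta\right)$; the latter does not imply the former. Moreover, the lemma you appeal to concerns the canonical projection of the \emph{given} estimator $\hat{\Theta}_{\rho,s\left(\rho\right)}$, so substituting a more convenient product-form member of the family does not establish that the projection of the original estimator depends on $x$ only through $\tilde{x}_{\iota}$, nor even that it is identically distributed with a function of $\tilde{X}_{\iota}$. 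Concretely, if $\hat{\Theta}_{\rho,s\left(\rho\right)}\left(x\right)$ couples coordinates---say through a pooled variance statistic or an elliptical shape---its projection onto the $\iota$ coordinates genuinely depends on $x_{j}$ for $j\notin\iota$, and the pointwise identity fails even though the data satisfy the conditional-independence hypothesis. To complete the argument along the paper's lines, drop the pointwise claim and show instead that, under the stated conditional independence, the law of $\hat{\Theta}_{\rho,s\left(\rho\right)}^{\iota}\left(X\right)$ can be realized as the law of a measurable function of $\tilde{X}_{\iota}$; equality in distribution is all the lemma requires.
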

\begin{proof}
By the conditional independence assumption, $\hat{\Theta}_{\rho,s\left(\rho\right)}^{\iota}\left(X\right)$
is conditionally independent of $\theta_{j}$ and $X_{j}$ for all
$j\notin\left\{ i\left(1\right),i\left(2\right),\ldots,i\left(\delta\right)\right\} $
given $\tilde{\theta}_{\iota},$ entailing the existence of a map
$\tilde{\Theta}_{\rho,s\left(\rho\right)}^{\iota}:\tilde{\Omega}_{\iota}\rightarrow\tilde{\mathcal{A}}_{\iota}$
such that $\tilde{\Theta}_{\rho,s\left(\rho\right)}^{\iota}\left(\tilde{X}_{\iota}\right)$
and $\hat{\Theta}_{\rho,s\left(\rho\right)}^{\iota}\left(X\right)$
are identically distributed. Then the above lemma yields the consequent.
\end{proof}
The theorem facilitates inference separately focused on each scalar
subparameter $\theta_{i}$ on the basis of the observation that $X_{i}=x_{i}\in\Omega_{i}$:
\begin{cor}
\label{cor:independence}If $X_{i}$ is conditionally independent
of $X_{j}$ and $\theta_{j}$ given $\theta_{i}$ for all $i\ne j,$
then, for any $i\in\left\{ 1,2,\dots,k\right\} ,$ the marginal distribution
$P_{\left\langle i\right\rangle }^{x}$ is the confidence measure
over $\Theta_{i}$ corresponding to some set $\left\{ \tilde{\Theta}_{\rho,s\left(\rho\right)}^{\left\langle i\right\rangle }:\rho\in\left[0,1\right],s\in S\right\} $
of interval estimators, each a map $\tilde{\Theta}_{\rho,s\left(\rho\right)}^{\left\langle i\right\rangle }:\Omega_{i}\rightarrow\tilde{\mathcal{A}}_{\left\langle i\right\rangle }.$\end{cor}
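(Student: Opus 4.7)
The plan is to derive the corollary as the one-dimensional specialization of the preceding theorem by taking the focus index to be $\iota=\left\langle i\right\rangle $, so that $\delta=1$, $\tilde{\theta}_{\iota}=\theta_{i}$, $\tilde{X}_{\iota}=X_{i}$, $\tilde{\Omega}_{\iota}=\Omega_{i}$, and $\tilde{\Theta}_{\iota}=\Theta_{i}$. The hypothesis of the corollary is literally the hypothesis of the theorem, so the theorem can be invoked directly without further assumptions.

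First I would verify that the conditional independence premise carries over unchanged: for any fixed $i$, the assumption that $X_{i}$ is conditionally independent of $X_{j}$ and $\theta_{j}$ given $\theta_{i}$ for every $j\neq i$ is exactly what the theorem requires for the focus index $\left\langle i\right\rangle $. Applying the theorem then produces a map $\tilde{\Theta}_{\rho,s\left(\rho\right)}^{\left\langle i\right\rangle }:\tilde{\Omega}_{\left\langle i\right\rangle }\rightarrow\tilde{\mathcal{A}}_{\left\langle i\right\rangle }$ whose image under $\tilde{X}_{\left\langle i\right\rangle }=X_{i}$ is identically distributed to the canonical projection $\hat{\Theta}_{\rho,s\left(\rho\right)}^{\left\langle i\right\rangle }\left(X\right)$. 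The theorem's conclusion that $P_{\iota}^{x}$ is the confidence measure over $\tilde{\Theta}_{\iota}$ corresponding to $\{\tilde{\Theta}_{\rho,s\left(\rho\right)}^{\iota}:\rho\in[0,1],s\in S\}$ then specializes word for word to the stated conclusion about $P_{\left\langle i\right\rangle }^{x}$ on $\Theta_{i}$.

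The only substantive point beyond mechanical substitution is the claim that each $\tilde{\Theta}_{\rho,s\left(\rho\right)}^{\left\langle i\right\rangle }$ is an \emph{interval} estimator rather than merely a set estimator. This follows from the observation made in Section \ref{sub:Simultaneous} that $\Theta_{i}$ is a subset of $\mathbb{R}^{1}$, so $\tilde{\mathcal{A}}_{\left\langle i\right\rangle }$ is the Borel $\sigma$-field on $\Theta_{i}$, and the shape set $S$ may be restricted without loss of generality to shapes producing one-dimensional intervals (the usual choice for confidence regions on a scalar parameter). Under that restriction, each canonical projection of $\hat{\Theta}_{\rho,s\left(\rho\right)}\left(x\right)$ onto the $i$th coordinate is an interval in $\Theta_{i}$, and hence so is its equidistributed counterpart $\tilde{\Theta}_{\rho,s\left(\rho\right)}^{\left\langle i\right\rangle }\left(x_{i}\right)$.

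I do not anticipate a real obstacle here; the corollary is essentially a tagging of the $\delta=1$ case of the theorem with the conventional name \emph{interval estimator}. The only point that requires care is the bookkeeping ensuring that the one-dimensional projections lie in the appropriate $\sigma$-field and that $\tilde{\Omega}_{\left\langle i\right\rangle }$ can legitimately be identified with $\Omega_{i}$, both of which are immediate from the definitions introduced in Section \ref{sub:Simultaneous}.
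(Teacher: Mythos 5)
Your proposal is correct and matches the paper's own proof, which likewise just invokes the theorem with the one-component focus index $\left\langle i\right\rangle $ so that $\delta=1$ and $\tilde{X}_{\left\langle i\right\rangle }=X_{i}$. Your extra remark justifying the word \emph{interval} for the scalar case is a harmless addition that the paper itself leaves implicit.
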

\begin{proof}
Under the stated conditions, the theorem entails the existence of
a map $\tilde{\Theta}_{\rho,s\left(\rho\right)}^{\left\langle i\right\rangle }:\tilde{\Omega}_{\left\langle i\right\rangle }\rightarrow\tilde{\mathcal{A}}_{\left\langle i\right\rangle }$
such that $\tilde{\Theta}_{\rho,s\left(\rho\right)}^{\left\langle i\right\rangle }\left(\tilde{x}_{\left\langle i\right\rangle }\right)=\hat{\Theta}_{\rho,s\left(\rho\right)}^{\left\langle i\right\rangle }\left(x\right)$
with $\tilde{x}_{\left\langle i\right\rangle }=x_{i}$ for every $x\in\Omega$
and entails that the marginal distribution $P_{\left\langle i\right\rangle }^{x}$
is the confidence measure over $\tilde{\Theta}_{\left\langle i\right\rangle }$
corresponding to $\left\{ \tilde{\Theta}_{\rho,s\left(\rho\right)}^{\left\langle i\right\rangle }:\rho\in\left[0,1\right],s\in S\right\} .$\end{proof}
\begin{rem}
\label{rem:marginal-over-components}The applications of Sections
\ref{sec:Null-estimation} and \ref{sec:Conditional} exploit this
property in order to draw inferences from the confidence levels $P_{\left\langle 1\right\rangle }^{x}\left(\left(\inf\Theta_{1},\theta^{\prime}\right)\right),P_{\left\langle 2\right\rangle }^{x}\left(\left(\inf\Theta_{2},\theta^{\prime}\right)\right),\ldots,P_{\left\langle d\right\rangle }^{x}\left(\left(\inf\Theta_{d},\theta^{\prime}\right)\right)$
of the hypotheses $\theta_{1}<\theta^{\prime},$ $\theta_{2}<\theta^{\prime},$
..., $\theta_{d}<\theta^{\prime}$, respectively, for very large $d$.
Here, $\delta=1,$ each subscript $\left\langle j\right\rangle $
is the 1-tuple representation of the vector $\iota$ with $j$ as
its only component, and $\theta^{\prime}$ is the scalar supremum
shared by all $d$ hypotheses. 
\end{rem}

\section{\label{sec:Null-estimation}Null estimation for genome-scale screening}

\subsection{\label{sub:Estimation-of-null}Estimation of the null posterior}

In the presence of hundreds or thousands of hypotheses, the novel
methodology of \citet{RefWorks:57} can improve evidential inference
by estimation of the null distribution. While \citet{RefWorks:57}
originally applied the estimator to effectively condition the LFDR
on an estimated distribution of null \emph{p}-values, he noted that
its applications potentially encompass any procedure that depends
on the distribution of statistics under the null hypothesis. Indeed,
the stochasticity of parameters that enables estimation of the LFDR
by the empirical Bayes machinery need not be assumed for the pre-decision
purpose of deriving the level of confidence that each gene is differentially
expressed. Thus, the methodology of \citet{RefWorks:57} outlined
in Section in terms of \emph{p}-values can be appropriated to adjust
confidence levels (§\ref{sec:Statistical-framework}) since $P_{\left\langle i\right\rangle }^{x}\left(\left(-\infty,\theta^{\prime}\right)\right),$
the level of confidence that a scalar subparameter $\theta_{i}$ is
less than a given scalar $\theta^{\prime},$ is numerically equal
to $p_{\left\langle i\right\rangle }^{x}\left(\theta^{\prime}\right),$
the upper-tailed \emph{p}-value for the test of the hypothesis that
$\theta_{i}=\theta^{\prime}.$ Specifically, confidence levels are
adjusted in this paper according to the estimated confidence measure
under the null hypothesis rather than according to an assumed confidence
measure under the null hypothesis. 

Treating the parameters indicating differential expression as fixed
rather than as exchangeable random quantities arguably provides a
closer fit to the biological system in the sense that certain genes
remain differentially expressed and other genes remain by comparison
equivalently expressed across controlled conditions under repeated
sampling. While the confidence measure is a probability measure on
parameter space, its probabilities are interpreted as a degrees of
confidence suitable for coherent decision making (§\ref{sub:Non-additive-loss}),
not as physical probabilities modeling a frequency of events in the
system. The interpretation of parameter randomness in LFDR methods
is less clear except when the LFDR is seen as an approximation to
a Bayesian posterior probability under a hierarchical model. 
\begin{example}
\label{exa:tomato}A tomato development experiment of \citet{RefWorks:8}
yielded $n=6$ observed ratios of mutant expression to wild-type expression
in most of the $d=13,340$ genes on the microarray with missing data
for many genes. For the $i$th gene, the interest parameter $\theta_{i}$
is the expectation value of $X_{i},$ the logarithm of the expression
ratio. The hypothesis $\theta_{i}<0$ corresponds to downregulation
of gene $i$ in the mutant, whereas $\theta_{i}>0$ corresponds to
upregulation. To obviate estimation of a joint distribution of $d$
parameters, the independence conditions of Corollary \ref{cor:independence}
are assumed to hold. Also assuming normally distributed $X_{i},$
the one-sample \emph{t}-test gave the upper-tail \emph{p}-value equal
to the confidence level $P_{\left\langle i\right\rangle }^{x_{i}}\left(\mathbb{R}_{-}\right)$
for each gene. The notation is that of Remark \ref{rem:marginal-over-components},
except with the replacement of each $x$ subscript with $x_{i}$ to
emphasize that only the $i$th observed vector influences the confidence
level corresponding to the $i$th parameter. Efron's \citeyearpar{RefWorks:208}
maximum-likelihood method of estimating the null distribution from
a vector of \emph{p}-values provided the estimated null confidence
measure that is very close to the empirical distribution of the data
(Fig. \ref{fig:CDFs}), which is consistent with but does not imply
the truth of all null hypotheses of equivalent expression $\left(\theta_{i}=0\right)$.
Using that estimate of the null distribution in place of the uniform
distribution corresponding to the Student \emph{t }distribution of
test statistics has the effect of adjusting each confidence level.
Since extreme confidence levels are adjusted toward 1/2, the estimated
null reduces the confidence level both of genes with large values
of $P_{\left\langle i\right\rangle }^{x_{i}}\left(\mathbb{R}_{-}\right)$
(confidence of the hypothesis $\theta_{i}<0$) and of those with large
values of $P_{\left\langle i\right\rangle }^{x_{i}}\left(\mathbb{R}_{+}\right)$
(confidence of the hypothesis $\theta_{i}>0$). Fig. \ref{fig:microarray}
displays the effect of this confidence-level adjustment in more detail. 
\end{example}
\begin{figure}
\includegraphics[scale=0.7]{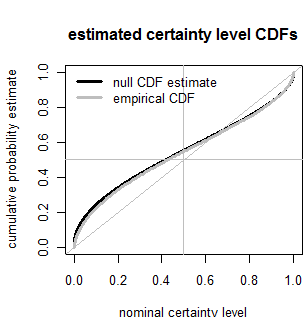}

\caption{The black curve is the estimated cumulative distribution function
(CDF) of the confidence levels under the null distribution, which
corresponds to equivalently expressed or unaffected genes; the gray
curve is the empirical CDF of all confidence levels, including those
of differentially expressed or affected genes. Here, observed confidence
coefficients corresponding to hypotheses are interpreted as levels
of certainty (§§\ref{sub:Posterior}, \ref{sub:Non-additive-loss}).
Departure of the black curve from the diagonal line reflects violation
of independence or of the lognormal assumption used to compute the
confidence levels. As one-sided \emph{p}-values, these confidence
levels would be uniform under the hypothesis of equivalent expression
given the assumptions; i.e., the $\Phi^{-1}$-transformed confidence
levels of unaffected genes are assumed to be $\N\left(0,1^{2}\right),$
where $\Phi^{-1}$ is the standard normal quantile function. The distribution
of $\Phi^{-1}$-transformed confidence levels under that null hypothesis
was estimated to instead be $\N\left(-0.21,\left(1.55\right)^{2}\right).$
The data set, model, and null distribution estimator are those of
Example \ref{exa:tomato}.\label{fig:CDFs}}

\end{figure}

\begin{figure}
\includegraphics[scale=0.55]{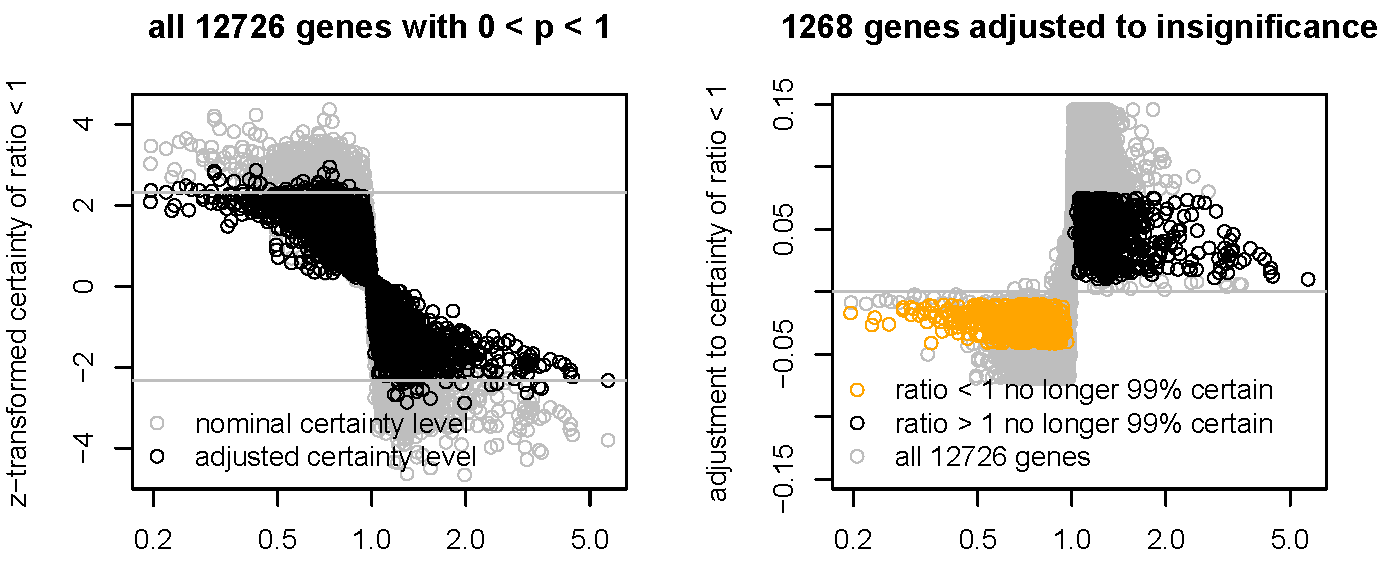}

\caption{Impact of null estimation on the confidence level as the measure of
certainty or statistical significance. The data set, model, and null
distribution estimator are those of Example \ref{exa:tomato} and
Fig. \ref{fig:CDFs}. \emph{Left panel:} The transformed confidence
level $\Phi^{-1}\left(P_{\left\langle i\right\rangle }^{x_{i}}\left(\mathbb{R}_{-}\right)\right)$
for gene $i$ versus the expression ratio estimated as the geometric
sample mean of the observed expression ratio for the same gene. Here,
the confidence level $P_{\left\langle i\right\rangle }^{x_{i}}\left(\mathbb{R}_{-}\right)$
is the degree of certainty of the hypothesis that the mean log-transformed
expression ratio is negative or, equivalently, of the hypothesis that
the true expression ratio is less than 1. The horizontal lines are
drawn at $P_{\left\langle i\right\rangle }^{x_{i}}\left(\mathbb{R}_{-}\right)=99\%$
and at $P_{\left\langle i\right\rangle }^{x_{i}}\left(\mathbb{R}_{+}\right)=1-P_{\left\langle i\right\rangle }^{x_{i}}\left(\mathbb{R}_{-}\right)=99\%.$
Of the original 13,340 genes, 1062 genes have less than the two observations
needed for the test statistic and 2 genes have infinite normal-transformed
confidence levels and thus are not displayed. Each circle corresponds
to a gene, with black for $P_{\left\langle i\right\rangle }^{x_{i}}\left(\mathbb{R}_{-};\hat{F}_{0}\right),$
the confidence level of $\theta_{i}\in\mathbb{R}_{-}$ using the estimated
null distribution $\hat{F}_{0}$ and with gray for $P_{\left\langle i\right\rangle }^{x_{i}}\left(\mathbb{R}_{-};\tilde{F}_{0}\right),$
the same except using the assumed null distribution $\tilde{F}_{0}.$
\emph{Right panel:} The difference between $P_{\left\langle i\right\rangle }^{x_{i}}\left(\mathbb{R}_{-};\hat{F}_{0}\right)$
and $P_{\left\langle i\right\rangle }^{x_{i}}\left(\mathbb{R}_{-};\tilde{F}_{0}\right)$
versus the estimated expression ratio. Orange circles represent genes
satisfying $P_{\left\langle i\right\rangle }^{x_{i}}\left(\mathbb{R}_{-};\tilde{F}_{0}\right)>99\%$
but $P_{\left\langle i\right\rangle }^{x_{i}}\left(\mathbb{R}_{-};\hat{F}_{0}\right)\le99\%$;
black circles represent genes satisfying $P_{\left\langle i\right\rangle }^{x_{i}}\left(\mathbb{R}_{+};\tilde{F}_{0}\right)>99\%$
but $P_{\left\langle i\right\rangle }^{x_{i}}\left(\mathbb{R}_{+};\hat{F}_{0}\right)\le99\%.$
\label{fig:microarray}}

\end{figure}

\subsection{\label{sub:Non-additive-loss}Genome-scale screening loss}

\citet[§B.5.2]{citeulike:2758749} observed that with a suitable non-additive
loss function, optimal decisions in the presence of multiple comparisons
can be made on the basis of minimizing posterior expected loss. A
simple non-additive loss function is \begin{equation}
L_{a,c}\left(M,m\right)=cM^{1+a}+m,\label{eq:loss}\end{equation}
where $M$ and $m$ are respectively the number of incorrect decisions
and the number of non-decisions concerning the $d$ components of
$\theta;$ $M+m\le d.$ The scalars $a\in\mathbb{R}^{1}$ and $c>0$
reflect different aspects of risk aversion: $a$ is an acceleration
in the sense of quantifying the interactive compounding effect of
multiple errors, whereas if $a=0,$ then $c$ is the ratio of the
cost of making an incorrect decision to the cost of not making any
decision or, equivalently, the benefit of making a correct decision. 

\citet{RefWorks:22} and \citet{RefWorks:111} applied additive loss
$\left(a=0\right)$ to decisions of whether or not a biological feature
is affected. That special case, however, does not accurately represent
the screening purpose of most genome-scale studies, which is to formulate
a reasonable number of hypotheses about features for confirmation
in a follow-up experiment. More suitable for that goal, $a>0$ allows
generation of hypotheses for at least a few features even on slight
evidence without leading to unmanageably high numbers of features
even in the presence of decisive evidence. 

Fig. \ref{fig:decisions} displays the result of minimizing such an
expected loss with respect to the confidence posterior (\ref{eq:matching})
under the above class of loss functions (\ref{eq:loss}) for decisions
on the direction of differential gene expression (Example \ref{exa:tomato}).
(Taking the expectation value over the confidence measure rather than
over a Bayesian posterior measure was justified in Section \ref{sub:Posterior}.)

\begin{figure}
\includegraphics[scale=0.7]{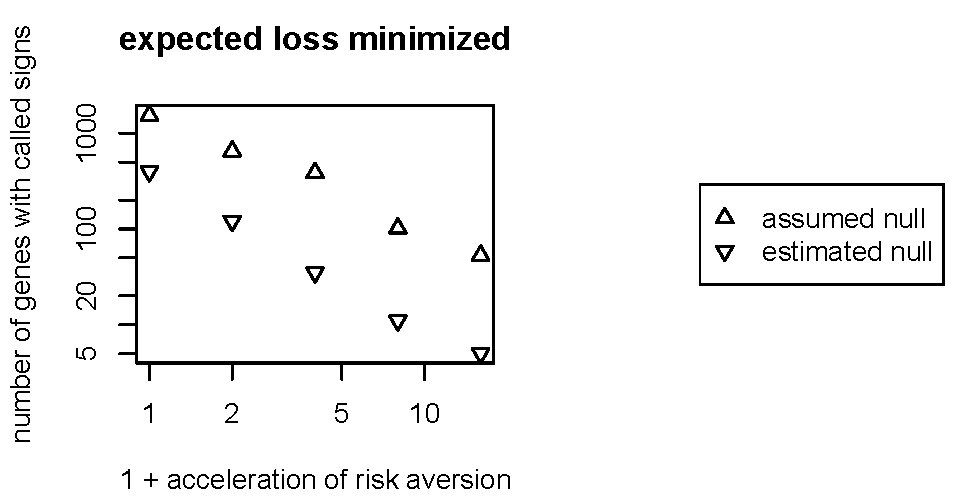}

\caption{Number $d-m$ of decisions on whether the $i$th gene is overexpressed
$\left(\theta_{i}>0\right)$ or underexpressed $\left(\theta_{i}<0\right)$
plotted against $1+a,$ where $a$ is the degree to which the loss
per incorrect decision increases with the number of incorrect decisions
(\ref{eq:loss}). The sign call or decision on the direction of regulation
for each gene was either made or not made such that the following
Monte Carlo approximation to the expected loss $E^{x}\left(L_{a,9}\left(M,m\right)\right)=\int L_{a,9}\left(M,m\right)dP^{x}$
was minimized based alternately on the assumed null distribution $\tilde{F}_{0}$
and on the estimated null distribution $\hat{F}_{0}.$ The $k$th
of the $10^{4}$ values of $\theta_{i}$ was drawn from the frequentist
posterior (\ref{eq:matching}) independently for each gene $i$ to
compute the correct sign decisions according to the $k$th realization;
such correct decisions yielded $M_{k}$ and $m_{k},$ the number of
incorrect sign decisions and the number of non-decisions. The independence
of $\sigma$-fields corresponding to each gene's scalar component
of $\theta$ guaranteed by Corollary \ref{cor:independence} implies
$E^{x}\left(L_{a,9}\left(M,m\right)\right)\doteq10^{-4}\sum_{k=1}^{10^{4}}L_{a,9}\left(M_{k},m_{k}\right).$
The data set, model, and null distribution estimator are those of
Example \ref{exa:tomato} and Figs. \ref{fig:CDFs} and  \ref{fig:microarray}.\label{fig:decisions}}

\end{figure}

\section{\label{sec:Conditional}Null estimation as conditional inference}

\subsection{\label{sub:Simulations}Simulation study}

To record the effect of null distribution estimation on inductive
inference, a simulation study was conducted with $K=500$ independent
samples each of $d=10,000$ independent observable vectors, of which
95\% correspond to unaffected and 5\% to affected features such as
genes or single-nucleotide polymorphisms (SNPs). In Example \ref{exa:tomato},
an affected gene is one for which there is differential gene expression
between mutant and wild type. Assuming that each scalar parameter
$\theta_{i}$ is constrained to lie in the same set $\Theta_{1},$
the one-sided \emph{p}-value of each observable is equal to $P_{k,i}^{x}\left(\left(\inf\Theta_{1},\theta^{\prime}\right)\right),$
the $k$th confidence level of $\theta_{i}<\theta^{\prime},$ the
hypothesis that the parameter of interest for the $i$th observable
vector or feature is less than some value $\theta^{\prime}$ dividing
two meaningful hypotheses, as discussed in Section \ref{sub:Two-sided-testing}
and illustrated in Fig. \ref{fig:microarray}. (This notation differs
from that of Remark \ref{rem:marginal-over-components} in adapting
the superscript of the confidence level and from that of Example \ref{exa:tomato}
in dropping the subscript of $x_{k,i}$ for ease of reading.) As $\theta_{i}=\theta^{\prime}$
is treated as a null hypothesis for the purpose of estimating or assuming
the null distribution, it naturally corresponds an unaffected feature.
Each confidence level was generated from $\Phi,$ the standard normal
CDF, of $Z_{k,i}\sim\N\left(0,\varsigma_{k}^{2}\right)$ for $i\in\left\{ 1,\ldots,9500\right\} $
or of $Z\sim\N\left(5\varsigma_{k}/2,\left(5\varsigma_{k}/4\right)^{2}\right)$
for $i\in\left\{ 9501,\ldots,10^{4}\right\} $. Rather than fixing
$\varsigma_{k}$ at 1 for all $k$ \citep[Fig. 5]{RefWorks:57}, $\varsigma_{k}$
was instead allowed to vary across samples in order to model sample-specific
variation that influences the distribution of \emph{p}-values. For
every $k$ in $\left\{ 1,\ldots,K\right\} ,$ $\log\varsigma_{k}$
is independent and equal to 2/3 with probability 30\%, 1 with probability
40\%, or 3/2 with probability 30\%. Each simulated sample was analyzed
with the same maximum-likelihood method of estimating the null distribution
used in the above gene expression example, in which the realized value
of $\varsigma_{k}$ was predicted to be about 3/2 (Fig. \ref{fig:CDFs}).

Because $\varsigma_{k}$ is an ancillary statistic in the sense that
its distribution is not a function of the parameter and since estimation
of the null distribution approximates conditioning the \emph{p}-values
and equivalent confidence levels on the estimated value of $\varsigma_{k},$
null estimation is required by the conditionality principle \citep{conditionalityPrinciple1958},
in agreement with the analogy with conditioning on observed row or
column totals in contingency tables \citep{RefWorks:57}. See \citet{Shi2008458}
for further explanation of the relevance of the principle to estimation
of the null distribution. 

Accordingly, performance of each method of computing confidence levels,
whether under the assumed null distribution $\tilde{F}_{0}$ or estimated
null distribution $\hat{F}_{0}$, was evaluated in terms of the proximity
of $P_{k,i}^{x}\left(\left(\inf\Theta_{1},\theta^{\prime}\right);F_{0}\right),$
the confidence level of $\theta_{i}<\theta^{\prime}$ for trial $k$
and feature $i$ based on the null hypothesis of distribution $F_{0}\in\left\{ \hat{F}_{0},\tilde{F}_{0}\right\} ,$
to $P_{k,i}^{x}\left(\left(\inf\Theta_{1},\theta^{\prime}\right)|\varsigma_{k}=\sigma_{k}\right)$,
the corresponding true confidence level conditional on the realized
value $\sigma_{k}$ of $\varsigma_{k}$ used to generate the simulated
data of trial $k.$ For some $\alpha\in\left[0,1\right],$ the \emph{conservative
error} of relying on $F_{0}$ as the distribution under the null hypothesis
for the $k$th trial is the average difference in the number of confidence
levels incorrectly included in $\mathcal{B}=\left[\alpha,1-\alpha\right]$
and the number incorrectly included in $\bar{\mathcal{B}}=\left[0,1\right]\backslash\mathcal{B}:$
\begin{equation}
\sum_{i\in\mathcal{I}}\frac{1_{\mathcal{B}}\left(P_{k,i}^{x}\left(\Theta_{1}^{\prime};F_{0}\right)\right)1_{\bar{\mathcal{B}}}\left(P_{k,i}^{x}\left(\Theta_{1}^{\prime}|\sigma_{k}\right)\right)-1_{\mathcal{B}}\left(P_{k,i}^{x}\left(\Theta_{1}^{\prime}|\sigma_{k}\right)\right)1_{\bar{\mathcal{B}}}\left(P_{k,i}^{x}\left(\Theta_{1}^{\prime};F_{0}\right)\right)}{\left|\mathcal{I}\right|},\label{eq:conservatism}\end{equation}
where $\Theta_{1}^{\prime}=\left(\inf\Theta_{1},\theta^{\prime}\right)$
and where $\mathcal{I}=\left\{ 1,\ldots,9500\right\} $ for the unaffected
features or $\mathcal{I}=\left\{ 9501,\ldots,10^{4}\right\} $ for
the affected features. Here, $\alpha=1\%$ to quantify performance
near confidence values relevant to the inference problem of interpreting
the value of $P_{k,i}^{x}\left(\left(\inf\Theta_{1},\theta^{\prime}\right);F_{0}\right)$
as a degree of evidential support for $\theta_{i}<\theta^{\prime}.$
Values of the conservatism (\ref{eq:conservatism}) for the simulation
study described above appear in Fig. \ref{fig:mixtureCond}.

\begin{figure}
\includegraphics[scale=0.65]{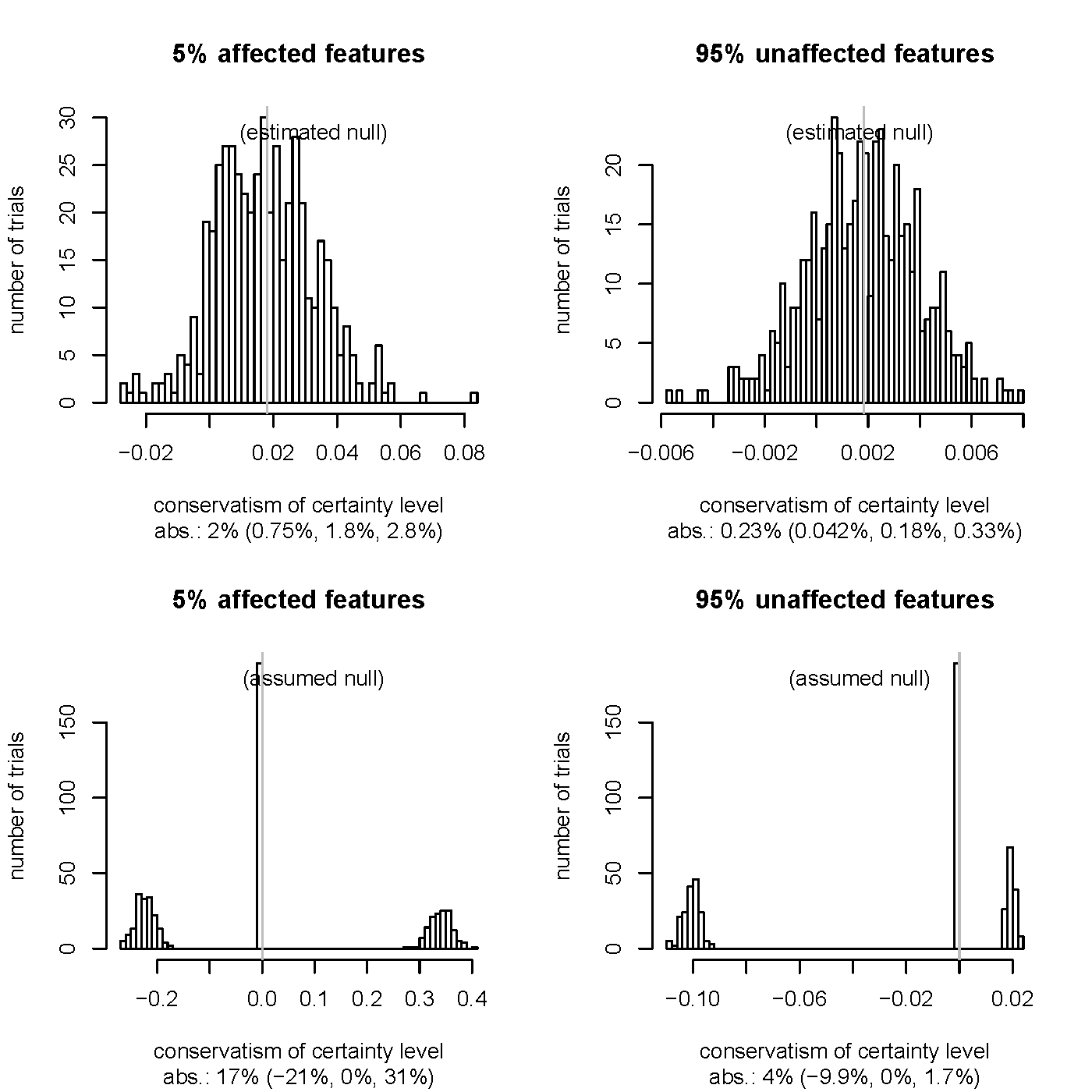}\caption{Conservative error (\ref{eq:conservatism}) when the assumed null
distribution is equal to the true null distribution conditional on
the most common value of the precision statistic $\left(\varsigma_{k}=1\right).$
The null distribution $F_{0}$ is the estimated distribution $\hat{F}_{0}$
in the top two plots and the assumed distribution $\tilde{F}_{0}$
in the bottom two plots. The two plots on the left and right give
the errors averaged over the 500 false and the 9500 true null hypotheses,
respectively.\label{fig:mixtureCond}}

\end{figure}

To determine the effect of analyzing confidence levels that are valid
marginal (unconditional) \emph{p}-values for the mixture distribution,
the confidence levels valid given $\varsigma_{k}=1$ were transformed
such that those corresponding to unaffected features are tail-area
probabilities under the marginal null distribution: \begin{eqnarray*}
P_{\theta^{\prime}}\left(Z_{k,i}<z_{k,i}\right) & = & \sum_{\sigma\in\left\{ 2/3,1,3/2\right\} }P\left(\varsigma_{k}=\sigma\right)P_{\theta^{\prime}}\left(Z_{k,i}<z_{k,i}|\varsigma_{k}=\sigma\right),\end{eqnarray*}
where $\Phi\left(z_{k,i}\right)$ or $P_{\theta^{\prime}}\left(Z_{k,i}<z_{k,i}\right)$
is the observed confidence level of $\theta_{k,i}<\theta^{\prime}$
before or after transformation, respectively. Fig. \ref{fig:mixtureMarg}
displays the results.

\begin{figure}
\includegraphics[scale=0.65]{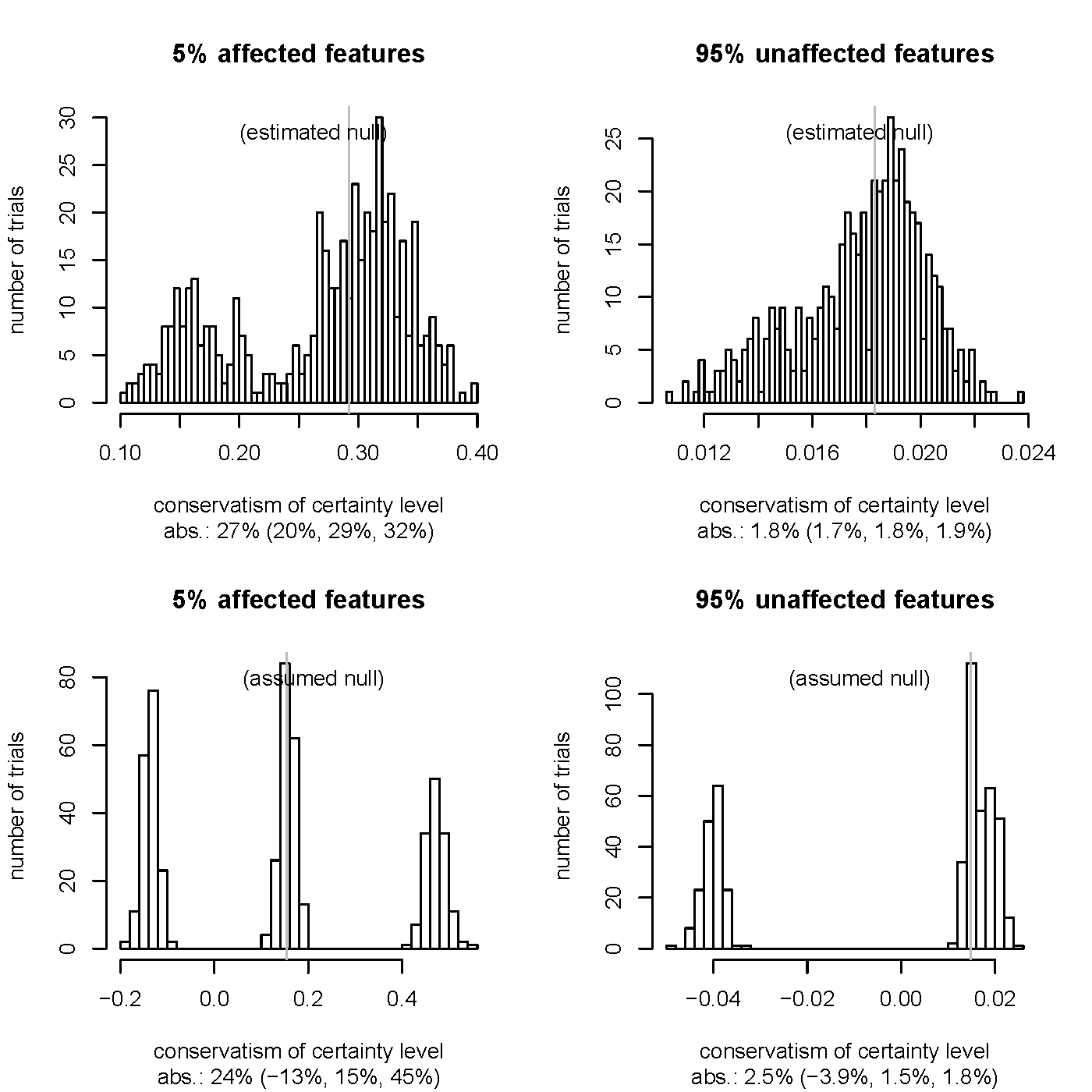}\caption{Conservative error (\ref{eq:conservatism}) when the assumed null
distribution is equal to the true null distribution marginal over
the distribution of precision statistic $\varsigma_{k}.$ The four
plots have the same arrangement as those of Fig. \ref{fig:mixtureCond}.\label{fig:mixtureMarg}}

\end{figure}

\subsection{\label{sub:Merit}Merit of estimating the null distribution}

While the degree of undesirable conservatism illustrates the potential
benefit of null estimation (§\ref{sub:Simulations}), it does not
provide case-specific guidance on whether to estimate the null distribution
for a given data set generated by an unknown distribution. Framing
the estimated null distribution as a conditioning statistic makes
such guidance available from an adaptation of a general measure \citep{Lloyd1992241}
that quantifies the benefit of conditioning inference on a given statistic.
Since an approximately ancillary statistic can be much more relevant
for inference than an exactly ancillary statistic, \citet{Lloyd1992241}
quantified the benefit of conditioning on a statistic by the sum of
its degree of ancillarity and its degree of relevance, each degree
defined in terms of observed Fisher information. To assess the benefit
of conditioning inference on the estimated null distribution, the
ancillarity and relevance are instead measured in terms of some nonnegative
divergence or \emph{relative information} $I\left(F||G\right)$ between
distributions $F$ and $G$ as follows. The ancillarity of the estimated
distribution $\hat{F}_{0}$ for $d_{1}$ affected features is the
extent to which the parameter of interest is independent of the estimate:
\begin{equation}
A\left(d_{1}\right)=-I\left(\hat{F}_{0}^{d_{1}}||\hat{F}_{0}\right).\label{eq:ancillarity}\end{equation}
Here, $\hat{F}_{0}^{d_{1}}$ represents the estimated null distribution
with its $d_{1}$ affected features replaced with unaffected features.
More precisely, $\hat{F}_{0}^{d_{1}}$ is the estimate of the null
distribution obtained by replacing each of the $d_{1}$ confidence
levels farthest from 0.5 with $\left(r-1/2\right)/d,$ the expected
order statistic under the assumed null distribution, where $r$ is
the rank of the distance of the replaced confidence level from 0.5.
Exact ancillarity, $A\left(d_{1}\right)=0,$ thus results only when
$\hat{F}_{0}^{d_{1}}=\hat{F}_{0},$ which holds approximately for
all $d_{1}$ if $\hat{F}_{0}$ is close to the assumed null distribution.
Conditioning on a null distribution estimate is effective to the extent
that its relevance, \begin{equation}
R=I\left(\hat{F}_{0}||\tilde{F}_{0}\right),\label{eq:relevance}\end{equation}
is higher than its \emph{nonancillarity}, $I\left(\hat{F}_{0}^{d_{1}}||\hat{F}_{0}\right)$. 

The importance of tail probabilities in statistical inference calls
for a measure of divergence $I\left(F||G\right)$ between distributions
$F$ and $G$ with more tail dependence than the Kullback-Leibler
divergence. The Rényi divergence $I_{q}\left(F||G\right)$ of order
$q\in\left(0,1\right)$ satisfies this requirement, and $I_{1/2}\left(F||G\right)$
has proved effective in signal processing as a compromise between
the divergence with the most extreme dependence on improbable events
$\left(\lim_{q\rightarrow0}I_{q}\left(F||G\right)\right)$ and the
Kullback-Leibler divergence $\left(\lim_{q\rightarrow1}I_{q}\left(F||G\right)\right).$
Another advantage of $q=1/2$ is that the commutivity property $I_{q}\left(F||G\right)=I_{q}\left(G||F\right)$
holds only for that order. The notation presents $I_{q}\left(F||G\right)$
as the order-$q$ \emph{information} gained by replacing $G$ with
$F$ \citep[§9.8]{RenyiBook1970}. Since the random variables of the
assumed and estimated null distributions are \emph{p}-values or confidence
levels transformed by $\Phi^{-1}$ (Fig. \ref{fig:CDFs}) and since
both distributions are normal, the relative information of order $1/2$
is simply\[
I_{1/2}\left(F||G\right)=-2\log_{2}\left(\frac{\left(\mu_{F}-\mu_{G}\right)^{2}}{4\left(\sigma_{F}^{2}+\sigma_{G}^{2}\right)}+\frac{1}{2}\ln\left(\frac{\sigma_{F}^{2}+\sigma_{G}^{2}}{2\sigma_{F}\sigma_{G}}\right)\right)\]
with $F=\N\left(\mu_{F},\sigma_{F}^{2}\right)$ and $G=\N\left(\mu_{G},\sigma_{G}^{2}\right).$

Assembling the above elements, the net \emph{inferential benefit}
of estimating the null distribution is\begin{equation}
B\left(d_{1}\right)=A\left(d_{1}\right)+R=I_{1/2}\left(\hat{F}_{0}||\tilde{F}_{0}\right)-I_{1/2}\left(\hat{F}_{0}^{d_{1}}||\hat{F}_{0}\right)\label{eq:benefit}\end{equation}
if there are $d_{1}$ affected features, where $\tilde{F}_{0}=\N\left(0,1\right)$
and where the ancillarity $A\left(d_{1}\right)$ and relevance $R$
are given by equations (\ref{eq:ancillarity}) and (\ref{eq:relevance})
with $I=I_{1/2}.$ Basing inference on the estimated null distribution
is effective to the extent that $B\left(d_{1}\right)>0.$ Fig. \ref{fig:benefit}
uses the gene expression data to illustrate the use of $B\left(d_{1}\right)$
to determine whether to rely on the estimated null distribution $\hat{F}_{0}$
or on the assumed null distribution $\tilde{F}_{0}$ for inference. 

\begin{figure}
\includegraphics[scale=0.7]{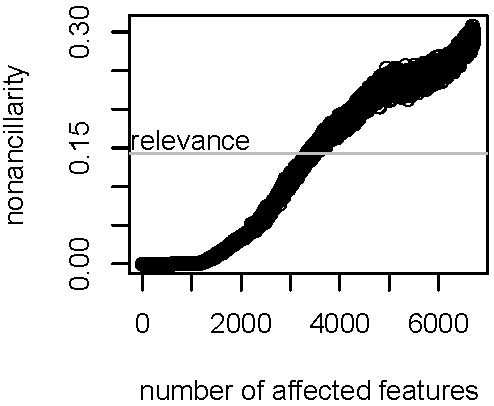}\caption{The nonancillarity $-A\left(d_{1}\right)$ versus the hypothetical
number $k_{1}$ of affected features. The gray horizontal line is
the relevance $R$ of null estimation and thus indicates the point
at which conditioning on the estimate goes from beneficial $\left(\left|A\left(d_{1}\right)\right|<R\right)$
to deleterious $\left(\left|A\left(d_{1}\right)\right|>R\right)$
according to equation (\ref{eq:benefit}). The data set, model, and
null distribution estimator are those of Example \ref{exa:tomato}
and Figs. \ref{fig:CDFs}, \ref{fig:microarray}, and \ref{fig:decisions}.\label{fig:benefit}}

\end{figure}

\section{\label{sec:Discussion}Discussion}

Whereas most adjustments for multiple comparisons are aimed at minimizing
net loss incurred over a series of decisions optimized over the sample
space rather than at weighing evidence in a particular data set for
a hypothesis, adjustments resulting from estimation of the distribution
of test statistics under the null hypothesis are appropriate for all
forms of frequentist hypothesis testing (§\ref{sub:Multiplicity}).
A form seldom considered in non-Bayesian contexts is that of making
coherent decisions by minimizing loss averaged over the parameter
space. Taking a step toward filling this gap, Section \ref{sub:Non-additive-loss}
provides a loss function suitable for genome-scale screening rather
than for confirmatory testing and illustrates its application to the
detecting evidence of gene upregulation or downregulation in microarray
data. 

Simulations measured the extent to which estimating the null distribution
improves conditional inference in an extreme multiple-comparisons
setting such as that of finding evidence for differential gene expression
in microarray measurements (§\ref{sub:Simulations}). While confidence
levels of evidence tended to err on the conservative side under both
the estimated and assumed null distributions, conservative error quantified
by numbers of confidence levels in $\left[1\%,99\%\right]$ compared
to the confidence levels conditional on the precision statistic $\varsigma_{k}$
was excessive under the assumed null but negligible under the estimated
null (Fig. \ref{fig:mixtureCond}). (Since the same pattern of relative
conditional performance was obtained by more realistically setting
$\log\varsigma_{k}$ equal to a variate that is independent and uniformly
distributed between $\log\left(1/2\right)$ and $\log\left(2\right),$
those results were not displayed.) Due to the heavy tails of the marginal
distribution of pre-transformed confidence levels under the null hypothesis,
transforming them to satisfy that distribution under the assumed null
increased their conditional conservatism, resulting in about the same
performance of estimated and assumed null distributions with respect
to the affected features. The case of the unaffected features is more
interesting: the assumed null distribution, which after the transformation
is marginally exact and hence valid for Neyman-Pearson hypothesis
testing, incurs 35\% more conservative error than the estimated null
distribution (Fig. \ref{fig:mixtureMarg}). Thus, the use of the marginal
null distribution in place of~ $\N\left(0,1\right),$ the distribution
conditional on the central component of the mixture, substantially
increases conservative error irrespective of whether the null is assumed
or estimated. These results suggest that confidence levels better
serve inductive inference when derived from a plausible conditional
null distribution than from the marginal distribution even though
the latter conforms to the Neyman-Pearson standard. This recommendation
reinforces the conditionality principle, which is appropriate for
the inferential goal of significance testing as opposed to the various
decision-theoretic motivations behind Neyman-Pearson testing (§\ref{sub:Multiplicity}).

Since the findings of the simulation study do not guarantee the effectiveness
of an estimated null distribution $\hat{F}_{0}$ over the assumed
null distribution $\tilde{F}_{0},$ Section \ref{sub:Merit} gave
an information-theoretic score for determining whether to depend on
$\hat{F}_{0}$ in place of$\tilde{F}_{0}$ for inference on the basis
of a particular data set. The score serves as a tool for discovering
whether the ancillarity and inferential relevance of $\hat{F}_{0}$
call for its use in inference and decision making.

\section{Acknowledgments}

This research was partially supported by the Faculty of Medicine of
the University of Ottawa and by Agriculture and Agri-Food Canada.
I thank Xuemei Tang for providing the fruit development microarray
data. The \emph{Biobase} \citep{RefWorks:161} and \emph{locfdr} \citep{RefWorks:208}
packages of R \citep{R2008} facilitated the computational work. 

\bibliographystyle{elsarticle-harv}
\bibliography{refman}

\end{document}